\title{On a Relation Between the Rate-Distortion Function and Optimal Transport}
\author{Eric Lei, Hamed Hassani and Shirin Saeedi~Bidokhti \\
Dept.\@ of Electrical and Systems Engineering, University of Pennsylvania, USA\\
\texttt{\{elei,hassani,saeedi\}@seas.upenn.edu} \\
}
\begin{document}

\maketitle

\begin{abstract}
    We discuss a relationship between rate-distortion and optimal transport (OT) theory, even though they seem to be unrelated at first glance. In particular, we show that a function defined via an extremal entropic OT distance is equivalent to the rate-distortion function. We numerically verify this result as well as previous results that connect the Monge and Kantorovich problems to optimal scalar quantization. Thus, we unify solving scalar quantization and rate-distortion functions in an alternative fashion by using their respective optimal transport solvers.
\end{abstract}

% \section{Introduction}
    % \subsection{Rate-Distortion}

    \textbf{Rate-Distortion.} Let $X \sim P_X$ be the source supported on $\mathcal{X}$. Let $\mathcal{Y}$ be the reproduction space, and $\dist: \mathcal{X} \times \mathcal{Y} \rightarrow \mathbb{R}_{\geq 0}$ be a distortion measure. The asymptotic limit on the minimum number of bits required to represent $X$ with average distortion at most $D$ is given by the rate-distortion function (\cite{CoverThomas}),  defined as
 %   \begin{definition} The rate-distortion function $R(D)$ of a source $P_X$ under distortion function $\dist$ is given by 
        \begin{equation}
    	    R(D) := \inf_{\substack{P_{Y|X}:  \EE_{P_{X,Y}}[\dist(X,Y)]  \leq D}} I(X;Y). 
    	    \label{eq:RD}
    	\end{equation} 
   % \end{definition} 
    Any rate-distortion pair $(R,D)$ satisfying $R > R(D)$ is achievable by some lossy source code, and no code can achieve a rate-distortion less than $R(D)$. 
    
    $R(D)$ has the following alternate form \cite[Ch.~10]{CoverThomas},
    \begin{equation}
    R(D) = \inf_{Q_Y} \inf_{\substack{P_{Y|X}:  \EE_{P_{X,Y}}[\dist(X,Y)]  \leq D}} \DKL(P_{X,Y}||P_X \otimes Q_Y).
    \label{eq:RD_alt}
    \end{equation}
    Due to the convex and strictly decreasing properties (\cite{CoverThomas}) of $R(D)$, it suffices to fix $\lambda > 0$ and solve
    \begin{equation}
        \inf_{Q_Y} \inf_{P_{Y|X}}\DKL(P_{X,Y}|| P_X \otimes Q_Y) + \lambda \mathop{\EE}_{P_{X,Y}}[\dist(X,Y)].
        \label{eq:RD_alt_regl}
    \end{equation}
    A solution to \eqref{eq:RD_alt_regl} corresponds to a point on $R(D)$ corresponding to $\lambda$. The Blahut-Arimoto (BA) algorithm (\cite{blahut, arimoto}) solves \eqref{eq:RD_alt} by alternating steps on $P_{Y|X}$ and $Q_Y$ until convergence. Sweeping over $\lambda$ gives the entire rate-distortion curve.

    % \subsection{Optimal Transport}
    \textbf{Optimal Transport.} We consider optimal transport (OT) under the Kantorovich formulation, which finds the minimum distortion coupling $\pi$ between measures $\mu$ and $\nu$\footnote{A joint distribution that marginalizes to $\mu$ and $\nu$.},
    \begin{equation}
        W(\mu, \nu) := \inf_{\substack{\pi \in \Pi(\mu, \nu)}} \EE_{X,Y\sim \pi}[\dist(X,Y)].
    \end{equation}
    Under certain conditions, the optimal coupling is induced by a fixed mapping, known as the Monge map. The Kantorovich problem is often regularized with an entropy term, 
    \begin{equation}
        S_\epsilon(\mu, \nu) := \inf_{\substack{\pi \in \Pi(\mu, \nu)}} \EE_{\pi}[\dist(X,Y)] + \epsilon \DKL(\pi||\mu \otimes \nu),
        \label{eq:entropic_OT}
    \end{equation}
    which is known as entropy-regularized optimal transport, with $\epsilon > 0$. For discrete measures $\mu,\nu$, \eqref{eq:entropic_OT} can be solved efficiently using Sinkhorn's algorithm (\cite{knopp_sinkhorn, sinkhorn}).

    % A brief introduction to optimal transport theory can be found in \cite{CourseNotesOT, ComputationalOT}. 
    \textbf{Related Work.} A connection between source coding and optimal transport was made in a talk given by \cite{GrayOT}, who discusses how scalar quantizers can be found through an extremal Monge/Kantorovich problem, and alludes to a similar connection for Shannon's rate-distortion function. Here, we concretely provide $R(D)$'s connection with entropic OT and discuss how their respective computational methods (Blahut-Arimoto and Sinkhorn-Knopp) can compute $R(D)$. In a similar vein, we empirically verify \cite{GrayOT}'s results and show that Lloyd-Max and Earth Mover's distance can both compute optimal scalar quantizers. A similar result relating rate-distortion with entropic OT was also reported in \cite{wu2022communication} which was unbeknownst to us at the time.

    % \section{Optimal Transport and the Rate-Distortion Function}
    % \subsection{Equivalence of Extremal Sinkhorn Divergence and Rate Distortion}
    \textbf{Main Result.}
    We first show that entropic OT can be used to upper bound $R(D)$. First, observe that the inner minimization problem in \eqref{eq:RD_alt_regl} looks similar to the entropic OT problem. Let us define 
    % \begin{align}
    %     S(D) := \inf_{Q_Y} \inf_{\substack{P_{Y|X}: \\ \EE_{P_{X,Y}}[\dist(X,Y)]  \leq D \\ Q_Y = \int_{\mathcal{X}} dP_{Y|X}P_X}} \DKL(P_{X,Y}||P_X \otimes Q_Y), \label{eq:SD_2}
    % \end{align}
    \begin{equation}
        S(D) := \inf_{Q_Y} \inf_{\substack{\pi \in \Pi(P_X, Q_Y): \\ \EE_{\pi}[\dist(X,Y)]  \leq D }} \DKL(\pi||P_X \otimes Q_Y), \label{eq:SD_2}
    \end{equation}
    which we call the \emph{Sinkhorn-distortion function}, and is an extremal entropic OT distance w.r.t. $P_X$. Similar to $R(D)$, we can trace out $S(D)$ by sweeping over $\lambda > 0$, and solving the inner minimization \eqref{eq:entropic_OT}, and then optimizing over all $Q_Y$, which is a convex problem in $Q_Y$ (\cite{feydy2019interpolating}). It is clear that $R(D) \leq S(D)$ by comparing \eqref{eq:SD_2} and \eqref{eq:RD_alt}. Next, we show that without further assumptions, $R(D)$ and $S(D)$ are equivalent.

        \begin{restatable}{theorem}{mainthm}%\emph{(Sinkhorn-Rate-Distortion Equivalence).}
        % \begin{theorem}[Sinkhorn-Rate-Distortion Equivalence]
            For any source $P_X$ and distortion function $\dist: \mathcal{X} \times \mathcal{Y} \rightarrow \mathbb{R}_{\geq 0}$, it holds that
            \begin{equation}
                R(D) = S(D).
            \end{equation}
        \end{restatable}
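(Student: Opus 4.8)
The plan is to establish the reverse of the inequality $R(D) \le S(D)$ that is already observed right after \eqref{eq:SD_2}, namely that $S(D) \le R(D)$; the theorem then follows at once. The single ingredient is the chain rule for relative entropy against a product reference: for any joint law $P_{X,Y}$ with marginals $P_X$ and $P_Y$ and any $Q_Y$ on $\mathcal Y$,
\begin{equation}
\DKL(P_{X,Y} \,\|\, P_X \otimes Q_Y) \;=\; I(X;Y) + \DKL(P_Y \,\|\, Q_Y) \;\ge\; I(X;Y),
\end{equation}
with equality if and only if $Q_Y = P_Y$; this is exactly the identity that collapses the double infimum in \eqref{eq:RD_alt} back to the original definition \eqref{eq:RD}.

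To get $S(D) \le R(D)$, take any $P_{Y|X}$ feasible for \eqref{eq:RD}, i.e.\ with $\EE_{P_{X,Y}}[\dist(X,Y)] \le D$, and let $P_Y$ denote the $Y$-marginal of $P_{X,Y}$. Put $\pi := P_{X,Y}$ and $Q_Y := P_Y$. Then $\pi \in \Pi(P_X, Q_Y)$ by construction, and $\EE_\pi[\dist(X,Y)] \le D$, so $(\pi, Q_Y)$ is feasible for the optimization in \eqref{eq:SD_2}; by the identity above its objective value equals $\DKL(\pi \,\|\, P_X \otimes Q_Y) = I(X;Y)$. Hence $S(D) \le I(X;Y)$ for every feasible $P_{Y|X}$, and taking the infimum over such $P_{Y|X}$ gives $S(D) \le R(D)$. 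For completeness, the converse $R(D) \le S(D)$ stated in the text follows symmetrically: disintegrate any feasible $\pi$ in \eqref{eq:SD_2} as $\pi = P_X \otimes P_{Y|X}$, which is then feasible in \eqref{eq:RD_alt} with the same objective. Combining the two inequalities yields $R(D) = S(D)$.

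I do not expect a substantive obstacle: the whole argument is just a matching of feasible sets and objective values, and since everything is phrased with infima we never need optimizers to exist. The only thing requiring care is measure-theoretic bookkeeping — that the relative-entropy decomposition holds on general alphabets (with the usual convention that it reads $+\infty = +\infty$ when $P_Y \not\ll Q_Y$, the inequality $S(D)\le R(D)$ being vacuous whenever $R(D)=+\infty$), and that the disintegration $P_{Y|X}$ used for the converse exists, which holds on the standard Borel spaces implicit in the setup.
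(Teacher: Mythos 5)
Your argument is correct and reaches the conclusion by a genuinely different and arguably cleaner route than the paper's. The paper establishes $S(D)\le R(D)$ by invoking the Blahut--Arimoto fixed-point conditions \eqref{eq:BA_1}--\eqref{eq:BA_2} to exhibit the specific $R(D)$-optimal pair $(Q_Y^*,P^*_{Y|X})$, arguing it is $S(D)$-feasible (because $Q_Y^*$ is precisely the $Y$-marginal of $P_XP^*_{Y|X}$) and even that it matches the Sinkhorn scaling form $u(x)e^{-\lambda\dist(x,y)}v(y)$ from \cite[Ch.~4, Prop.~4.3]{ComputationalOT} --- a ``feasibility at the optimizer'' argument that presupposes the existence and structure of the optimizers and works through the Lagrangian \eqref{eq:RD_alt_regl}. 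You instead show that \emph{every} $R(D)$-feasible $P_{Y|X}$ maps to an $S(D)$-feasible pair $(\pi=P_XP_{Y|X},\,Q_Y=P_Y)$ with identical objective, using only $\DKL(P_{X,Y}\,\|\,P_X\otimes P_Y)=I(X;Y)$; taking the infimum then gives $S(D)\le R(D)$ without ever needing optimizers to exist, without appealing to BA or Sinkhorn structure, and working directly from \eqref{eq:RD} rather than through the Lagrangian sweep over $\lambda$. What the paper's route buys is the sharper structural observation that the $R(D)$-achieving coupling is literally the optimal entropic-OT plan (a fact that underlies the algorithmic discussion later), whereas your route buys simplicity and generality --- it is just a matching of feasible sets and values, robust to the measure-theoretic caveats you flag. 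Both are valid; yours would make a shorter and more self-contained appendix proof.
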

        % \begin{proof}
        %     See Sec.~\ref{sec:proofs}.
        % \end{proof}

        % \begin{figure}[t]
        %     \centering
        %     \vspace{-3em}
        %     \includegraphics[width=0.5\linewidth]{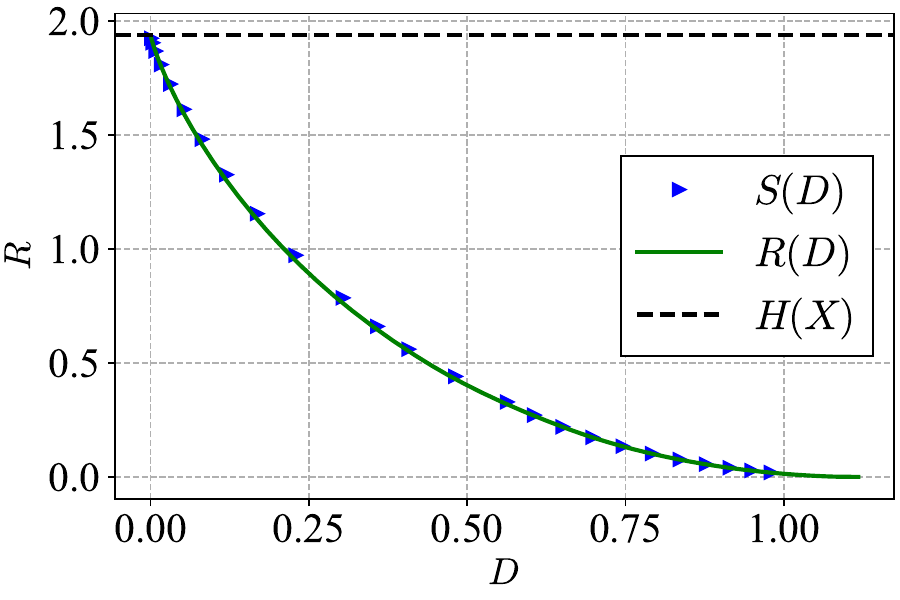}
        %     \caption{Equivalence of $S(D)$ and $R(D)$ on a discrete source with $\dist(x,y)=(x-y)^2$. }
        %     \label{fig:SD-RD}
        % \end{figure}

        \begin{figure*}[t]
            \centering
            \begin{minipage}{.48\textwidth}
                \centering
                \includegraphics[width=0.9\linewidth]{figures/5atoms_RD.pdf}
                \caption{Equivalence of $S(D)$ and $R(D)$ on a 5-atom discrete source with $\dist(x,y)=(x-y)^2$. }
                \label{fig:SD-RD}
            \end{minipage}%
            \hfill
            \begin{minipage}{0.48\textwidth}
                \centering
                \includegraphics[width=0.9\linewidth]{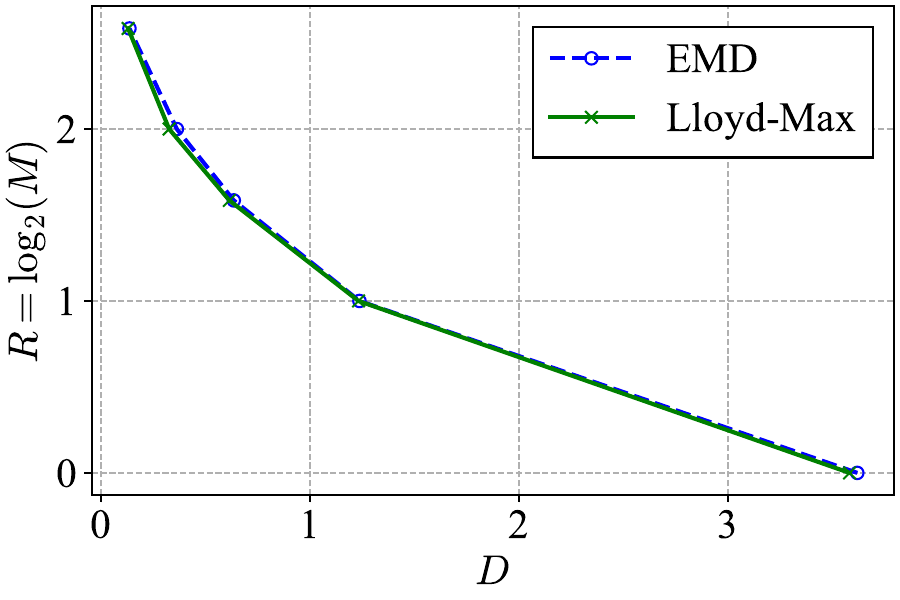}
                \caption{Equivalence of extremal EMD and Lloyd-Max for $M$-level scalar quantization.}
                \label{fig:SQ_EMD}
            \end{minipage}
        \end{figure*}

        See Sec.~\ref{sec:proofs} for the proof. We numerically verify the equivalence in Fig.~\ref{fig:SD-RD} on a discrete source with 5 atoms under squared-error distortion. For $R(D)$, we use Blahut-Arimoto, and for $S(D)$, we solve the convex problem using SQP solvers (\cite{SLSQP}) with $Q \mapsto S_{\epsilon}(P_X, Q)$ as the objective function, showing that the two different objectives result in the same function.

% \textbf{Sinkhorn-Blahut-Arimoto Algorithm.}
% The Sinkhorn-Blahut-Arimoto (SBA) algorithm numerically solves the Sinkhorn-distortion function in order to compute the rate-distortion function. 

\textbf{Discussion.}
Observe that the joint $P_{X,Y}=P_X P_{Y|X}$ defined in \eqref{eq:RD_alt} marginalizes to $P_X$ but not necessarily $Q_Y$, whereas the coupling $\pi$ in \eqref{eq:SD_2} marginalizes to both. This result says that the additional $Q_Y$ marginalization constraint in $S(D)$ plays no role when both objectives are infimized over $Q_Y$. In computing $R(D)$, this provides an alternative to Blahut-Arimoto: solve \eqref{eq:SD_2} directly over $Q_Y$, using Sinkhorn iterations as a subroutine when evaluating the objective function (or its gradient).  
A symmetrized variant of the Sinkhorn-distortion function is often used to solve generative modeling tasks with Sinkhorn divergences (\cite{sinkhornGAN, salimans2018improving, shen2020sinkhorn}), where one wishes to find some $Q_Y \approx P_X$ by solving $\min_{Q_Y} S_\epsilon (P_X, Q_Y)$. However, if one leaves the objective un-symmetrized, the optimal $Q_Y^*$ and coupling $\pi^*$ are actually $R(D)$-achieving distributions with $\lambda=1/\epsilon$, producing equivalent solutions to exact rate-distortion neural estimators (\cite{NERD}).

We also verify that in discrete settings, the extremal non-entropic OT function $\min_{Q_Y: |Q_Y| \leq M} W(P_X, Q_Y)$, where $|Q_Y|$ is the size of $Q_Y$'s alphabet, is equivalent to optimal scalar quantization of $P_X$ as shown in \cite{GrayOT}. In Fig.~\ref{fig:SQ_EMD}, we solve the $\min_{Q_Y: |Q_Y| \leq M} W(P_X, Q_Y)$ on a 10-atom source using a linear program to compute the Earth Mover's distance (EMD) $W(\cdot, \cdot)$ and pass the function to a SQP solver as before. The achieved rate-distortion is equivalent to that of Lloyd-Max ($M$-means). 

% Potential future work include further applications in using this result to connect generalization bounds that use rate-distortion theory \cite{sefidgaran2022rate} and those that use optimal transport \cite{8849359}. 

% \subsubsection*{Author Contributions}
% If you'd like to, you may include  a section for author contributions as is done
% in many journals. This is optional and at the discretion of the authors.

% \subsubsection*{Acknowledgements}
% Use unnumbered third level headings for the acknowledgements. All
% acknowledgements, including those to funding agencies, go at the end of the paper.

\subsubsection*{URM Statement}
All authors meet the URM criteria of ICLR 2023 Tiny Papers Track.

\bibliography{ref}

\begin{thebibliography}{14}
\providecommand{\natexlab}[1]{#1}
\providecommand{\url}[1]{\texttt{#1}}
\expandafter\ifx\csname urlstyle\endcsname\relax
  \providecommand{\doi}[1]{doi: #1}\else
  \providecommand{\doi}{doi: \begingroup \urlstyle{rm}\Url}\fi

\bibitem[Arimoto(1972)]{arimoto}
Suguru Arimoto.
\newblock An algorithm for computing the capacity of arbitrary discrete
  memoryless channels.
\newblock \emph{IEEE Transactions on Information Theory}, 18\penalty0
  (1):\penalty0 14--20, 1972.
\newblock \doi{10.1109/TIT.1972.1054753}.

\bibitem[Blahut(1972)]{blahut}
Richard Blahut.
\newblock Computation of channel capacity and rate-distortion functions.
\newblock \emph{IEEE Transactions on Information Theory}, 18\penalty0
  (4):\penalty0 460--473, 1972.
\newblock \doi{10.1109/TIT.1972.1054855}.

\bibitem[Cover \& Thomas(2006)Cover and Thomas]{CoverThomas}
Thomas~M. Cover and Joy~A. Thomas.
\newblock \emph{Elements of Information Theory (Wiley Series in
  Telecommunications and Signal Processing)}.
\newblock Wiley-Interscience, USA, 2006.
\newblock ISBN 0471241954.

\bibitem[Feydy et~al.(2019)Feydy, S{\'e}journ{\'e}, Vialard, Amari, Trouv{\'e},
  and Peyr{\'e}]{feydy2019interpolating}
Jean Feydy, Thibault S{\'e}journ{\'e}, Fran{\c{c}}ois-Xavier Vialard, Shun-ichi
  Amari, Alain Trouv{\'e}, and Gabriel Peyr{\'e}.
\newblock Interpolating between optimal transport and mmd using sinkhorn
  divergences.
\newblock In \emph{The 22nd International Conference on Artificial Intelligence
  and Statistics}, pp.\  2681--2690. PMLR, 2019.

\bibitem[Genevay et~al.(2018)Genevay, Peyre, and Cuturi]{sinkhornGAN}
Aude Genevay, Gabriel Peyre, and Marco Cuturi.
\newblock Learning generative models with sinkhorn divergences.
\newblock In Amos Storkey and Fernando Perez-Cruz (eds.), \emph{Proceedings of
  the Twenty-First International Conference on Artificial Intelligence and
  Statistics}, volume~84 of \emph{Proceedings of Machine Learning Research},
  pp.\  1608--1617. PMLR, 09--11 Apr 2018.

\bibitem[Gray(2013)]{GrayOT}
Robert~M. Gray.
\newblock Transportation distance, shannon information, and source coding.
\newblock GRETSI 2013 Symposium on Signal and Image Processing, 2013.
\newblock URL \url{https://ee.stanford.edu/~gray/gretsi.pdf}.

\bibitem[Knopp \& Sinkhorn(1967)Knopp and Sinkhorn]{knopp_sinkhorn}
Paul Knopp and Richard Sinkhorn.
\newblock {Concerning nonnegative matrices and doubly stochastic matrices.}
\newblock \emph{Pacific Journal of Mathematics}, 21\penalty0 (2):\penalty0 343
  -- 348, 1967.
\newblock \doi{pjm/1102992505}.
\newblock URL \url{https://doi.org/}.

\bibitem[Kraft(1988)]{SLSQP}
D.~Kraft.
\newblock \emph{A Software Package for Sequential Quadratic Programming}.
\newblock Deutsche Forschungs- und Versuchsanstalt f{\"u}r Luft- und Raumfahrt
  K{\"o}ln: Forschungsbericht. Wiss. Berichtswesen d. DFVLR, 1988.
\newblock URL \url{https://books.google.com/books?id=4rKaGwAACAAJ}.

\bibitem[Lei et~al.(2022)Lei, Hassani, and Saeedi~Bidokhti]{NERD}
Eric Lei, Hamed Hassani, and Shirin Saeedi~Bidokhti.
\newblock Neural estimation of the rate-distortion function with applications
  to operational source coding.
\newblock \emph{IEEE Journal on Selected Areas in Information Theory},
  3\penalty0 (4):\penalty0 674--686, 2022.
\newblock \doi{10.1109/JSAIT.2023.3273467}.

\bibitem[Peyré \& Cuturi(2019)Peyré and Cuturi]{ComputationalOT}
Gabriel Peyré and Marco Cuturi.
\newblock Computational optimal transport: With applications to data science.
\newblock \emph{Foundations and Trends® in Machine Learning}, 11\penalty0
  (5-6):\penalty0 355--607, 2019.
\newblock ISSN 1935-8237.
\newblock \doi{10.1561/2200000073}.
\newblock URL \url{http://dx.doi.org/10.1561/2200000073}.

\bibitem[Salimans et~al.(2018)Salimans, Zhang, Radford, and
  Metaxas]{salimans2018improving}
Tim Salimans, Han Zhang, Alec Radford, and Dimitris Metaxas.
\newblock Improving gans using optimal transport.
\newblock \emph{arXiv preprint arXiv:1803.05573}, 2018.

\bibitem[Shen et~al.(2020)Shen, Wang, Ribeiro, and Hassani]{shen2020sinkhorn}
Zebang Shen, Zhenfu Wang, Alejandro Ribeiro, and Hamed Hassani.
\newblock Sinkhorn natural gradient for generative models.
\newblock \emph{Advances in Neural Information Processing Systems},
  33:\penalty0 1646--1656, 2020.

\bibitem[Sinkhorn(1964)]{sinkhorn}
Richard Sinkhorn.
\newblock {A Relationship Between Arbitrary Positive Matrices and Doubly
  Stochastic Matrices}.
\newblock \emph{The Annals of Mathematical Statistics}, 35\penalty0
  (2):\penalty0 876 -- 879, 1964.
\newblock \doi{10.1214/aoms/1177703591}.
\newblock URL \url{https://doi.org/10.1214/aoms/1177703591}.

\bibitem[Wu et~al.(2022)Wu, Ye, Wu, Wu, Zhang, and Bai]{wu2022communication}
Shitong Wu, Wenhao Ye, Hao Wu, Huihui Wu, Wenyi Zhang, and Bo~Bai.
\newblock A communication optimal transport approach to the computation of rate
  distortion functions.
\newblock \emph{arXiv preprint arXiv:2212.10098}, 2022.

\end{thebibliography}
\bibliographystyle{iclr2023_conference_tinypaper}

\appendix
\section{Appendix}

\subsection{Proofs}
\label{sec:proofs}
% \begin{proposition}[Sinkhorn Distortion Upper Bound]
% 	    \label{prop:SD_UB}
% 	    Given source $P_X$ on $\mathcal{X}$, reconstruction space $\mathcal{Y}$, and distortion measure $\dist$,  
% 	    \begin{equation}
% 	        R(D) \leq S(D).
% 	        \label{eq:RD-Sinkhorn}
% 	    \end{equation}
% 	    \end{proposition}
% 	   \begin{proof}
%             The inner minimization problem of $R(D)$ in \eqref{eq:RD_alt} only has a marginal constraint on $P_X$, whereas the inner minimization of $S(D)$ in  \eqref{eq:SD_2} has an additional marginal constraint on $Q_Y$ as well.
%     	\end{proof}

     % \begin{theorem}[Sinkhorn-Rate-Distortion Equivalence]
     %        For any source $P_X$ and distortion function $\dist: \mathcal{X} \times \mathcal{Y} \rightarrow \mathbb{R}_{\geq 0}$, it holds that
     %        \begin{equation}
     %            R(D) = S(D).
     %        \end{equation}
     %    \end{theorem}
        \mainthm*
        \begin{proof}
        From \cite[Ch.~9]{CoverThomas}, the optimizers $Q^*_Y, P^*_{Y|X}$ of \eqref{eq:RD_alt_regl} for a fixed $\lambda > 0$ satisfy
        \begin{align}
            \frac{dP^*_{Y|X=x}}{dQ_Y}(x,y) &= \frac{e^{-\lambda \dist(x,y)}}{\int_{\mathcal{Y}} e^{-\lambda \dist(x,\tilde{y})}dQ^*_Y} , \label{eq:BA_1} \\
            Q^*_Y &= \int_{\mathcal{X}}  dP^*_{Y|X} dP_X, 
            \label{eq:BA_2}
        \end{align}
        simultaneously, which achieves a unique point on $R(D)$ corresponding to $\lambda$. 
             To show that $S(D)$ achieves the same objective as $R(D)$ on the same $P_X$ and distortion measure, it suffices to show that the $R(D)$-optimal $Q_Y^*$ and $P_{Y|X}^*$ are feasible for $S(D)$, since $R(D) \leq S(D)$. From \cite[Ch.~4, Prop.~4.3]{ComputationalOT}, the optimal coupling $\pi^*$ in entropic OT is unique and has the form 
            \begin{equation}
            \frac{d\pi^*}{dP_X dQ_Y} (x,y) = u(x) e^{-\lambda \dist(x,y)} v(y),
            \end{equation}
            where $u(x), v(y)$ are dual variables that ensure $\pi^*$ is a valid coupling. The $R(D)$-optimal joint distribution $P_XP_{Y|X}^*$, which is guaranteed to be a coupling between $P_X$ and $Q_Y^*$ due to \eqref{eq:BA_2}, indeed has the form
            \begin{equation}
                \frac{dP_XP_{Y|X}^*}{dP_X dQ_Y^*}(x,y) = \frac{1}{\int_{\mathcal{Y}}e^{-\lambda \dist(x,y')}dQ_Y^*} \cdot e^{-\lambda \dist(x,y)} \cdot 1,
            \end{equation}
           where the first term only depends on $x$ and the last term only depends on $y$.
            Since $R(D)$ is a lower bound of $S(D)$, we are done.
        \end{proof}

\end{document}